\newtheorem{lemma}{Lemma}
\newtheorem{proposition}{Proposition}
\def\bo{\mathring{\beta}}
\def\b*{\mathring{\beta}}
\def\ml{\dot{\ell}}
\def\hbeta{\hat{\beta}}
\def\mA{\mathcal{A}}
\title{Group Lasso Merger For Sparse Prediction \\ 
With High-dimensional Categorical Data}
\author{Szymon Nowakowski \footnote{Center4ML, University of Warsaw, 
Banacha 2, 02-097 Warsaw, Poland, sd.nowakowski2@uw.edu.pl },
Piotr Pokarowski \footnote{Institute of Applied Mathematics and Mechanics,
University of Warsaw, Banacha 2, 02-097 Warsaw, Poland, 
pokar@mimuw.edu.pl }
,Wojciech Rejchel 
\footnote{Faculty of Mathematics and Computer Science,
Nicolaus Copernicus University,  Chopina 12/18, 87-100, Toru\'n, Poland, 
wrejchel@gmail.com } 
}
\date{}
\begin{document}

\maketitle

\begin{abstract}
Sparse prediction with categorical data is challenging even for a moderate number of variables, because one parameter is roughly needed to encode one category or level. The Group Lasso is a well known efficient algorithm for selection continuous or categorical variables, but all estimates related to a selected factor usually differ, so a fitted model may not be sparse. To make the Group Lasso solution sparse, we propose to merge levels of the selected factor, if a difference between its corresponding estimates is less than some predetermined threshold. We prove that under weak conditions our algorithm, called GLAMER for Group LAsso MERger, recovers the true, sparse linear or logistic model even for the high-dimensional scenario, that is when a number of parameters is greater than a learning sample size. To our knowledge, selection consistency has been proven many times for different algorithms fitting sparse models with categorical variables, but our result is the first for the high-dimensional scenario. Numerical experiments show the satisfactory performance of the GLAMER.
\end{abstract}

\section{Introduction}

Sparse predictive modelling in the presence of both numerical and categorical variables (factors) is challenging even for a moderate number of variables, because a factor with $k$ levels is usually encoded as $k-1$ dummy variables and $k-1$ parameters are needed to learn. Moreover, dimensionality reduction associated with numerical variables is conceptually simple (leave or delete), while for categorical predictors we can either exclude the whole factor or merge its levels. Thus,  feasible models consist of subsets of numerical variables and partitions of levels of factors. In this paper, the task of selecting such models will be referred to as ``partition selection''.

Sparse high-dimensional prediction, where a number of active variables is significantly smaller than a learning sample size $n$ and a number of all variables $p$ greatly exceeds $n,$ has been a focus of research in statistical machine learning in recent years. However, popular methods fitting sparse predictive models with high-dimensional data do not merge levels of factors: the Lasso \citep{Tibshirani96} treats dummy variables as separate, binary predictors, the Group Lasso \citep{groupLasso} can only leave or delete a whole factor and  the Sparse Group Lasso \citep{simon2013sparse} additionally removes levels of selected factors. 
The Fused Lasso \citep{fused2005} computes partition selection, but only in a simplified form for an ordered variable.  These methods do not realize partition selection, but they significantly reduce a number of parameters and select
 variables, which may be an input for further, interpretable dimension reduction techniques.

In the mainstream research on the Lasso-type algorithms, the CAS-ANOVA method \citep{bondell2009} fits sparse linear models with fusion of factor levels using the $l_1$ penalty imposed on difference between parameters corresponding to levels of a factor. CAS-ANOVA has been implemented several times and now fits Generalized Linear Models \citep{gertheiss2010sparse, oelker14}. An alternative to the penalization is a greedy search as in the DMR algorithm for linear or logistic models \citep{MajK2015}. The method computes a nested family of candidate models via agglomerative clustering with a dissimilarity measure given by  likelihood ratio statistics corresponding to deletion of one variable or merging two levels of a factor. Finally, a model is selected by minimization of an information criterion. 

Recently, a growing interest in partition selection has been noticed. \citet{pauger2019bayesian} introduced a Bayesian method for linear models based on a prior inducing fusion of levels. Another Bayesian procedure is proposed by \citet{garcia2021variable}. The most recent algorithm for partition selection is probably SCOPE \citep{stokell2021}. The method uses a minimax concave penalty on differences between consecutive, sorted estimators of coefficients for levels of a factor. SCOPE fits linear or logistic models. Let us note that all above-mentioned methods for partition selection are restricted to a classical scenario $p<n$, except SCOPE and DMR, which in its new implementation is based on variables screened by the Group Lasso \citep{DMRnet}. 

The main contributions of this paper are summarized as follows:

1. We propose a new partition selection method for linear or logistic models. Our algorithm,  called GLAMER for  GRoup LAsso MERger, consists of of three steps. First, it computes the Group Lasso estimator and sorts the coefficients for each factor. In the second step it merges levels corresponding to subsequent estimates, if the difference between them is less than a predetermined threshold. In the third step an estimator is refitted by the maximum likelihood to reduce the estimation bias and improve prediction. GLAMER is a generalization of the Thresholded Lasso algorithm \citep{Zhou09, BuhlmannGeer11} for subset selection. It easily improves the Group Lasso's ability to choose a sparse model.

2. We prove an upper bound for  $l_{\infty}$ estimation error of the Group Lasso with an additional 
 diagonal matrix of weights.  Next, for an orthogonal design, we minimize the obtained bound with respect to the weights.   Our optimal weights are different  from those recommended by \citet{groupLasso}. 

3. We prove under weak conditions that GLAMER recovers the true, sparse linear or logistic model even for $p>>n$. To our knowledge, selection consistency has been proven many times for different algorithms fitting sparse models with categorical variables, but our result is the first for the high-dimensional scenario. We show that for an orthogonal design  our sufficient condition for partition selection is also necessary up to a universal constant.

4. In theoretical considerations the Lasso-type algorithms are defined for one penalty and return one coefficient estimator. However, practical implementations usually use nets of data-driven penalties and return lists of estimators. Our next contribution is  an analogous implementation of the GLAMER algorithm. In numerical experiments on four real data sets we compare GLAMER to competitive methods and show the satisfactory performance of our algorithm.

In the rest of this paper we describe the considered models and the GLAMER algorithm. We also present mathematical propositions with proofs, which describe properties of our method. Finally, we compare the GLAMER to other methods for sparse prediction in numerical experiments.

\section{Models and the algorithm}

The way we model data will encompass normal linear and logistic models as premier examples.
We consider independent data $(y_1,x_{1.}),(y_2,x_{2.}),\ldots,(y_n,x_{n.})$, 
where $y_i \in \mathbb{R}$ is a response variable and $x_{i.} \in\mathbb{R}^p$ is a vector of predictors. Every vector of predictors $x_{i.}$ can consist of continuous predictors as well as categorical predictors. 
We arrange them in the following way
$x_{i.}=(1, x_{i1}^T,x_{i2}^T,\ldots,x_{ir}^T).^T
$
Suppose that   $x_{ik}$ corresponds to a categorical predictor (factor) for some $k \in \{1,\ldots,r\}.$ Then a set of  levels of this factor is given by $\{0,1,2,\ldots,p_k\}$ and $x_{ik} \in \{0,1\}^{p_k}$ is a dummy vector corresponding to $k$-th predictor of $i$-th object in a data set. So, a reference level, say the zero level, is not included in $x_{ik}.$
If $x_{ik}$ corresponds to a continuous predictor, then simply $x_{ik} \in \mathbb{R}^{p_k}$ and $p_k=1.$
Therefore, a dimension of $x_{i.}$ is   $p=1+\sum_{k=1}^r p_k.$ Finally, let $X=[x_{1.},\ldots,x_{n.}]^T$ be a $n\times p$ design matrix.

We assume that for some $\mathring{\beta} \in \mathbb{R}^p$ and a known, convex and differentiable  function 
$\gamma:\mathbb{R}\rightarrow\mathbb{R}$ 
\begin{equation}
\label{inverseLink}
 \mathbb{E}y_i=\dot\gamma(x_{i.}^T\mathring{\beta})~\text{for}~i=1,2,\dots,n,
\end{equation}
where $\dot \gamma$ denotes the derivative of $\gamma.$
Coordinates of $\bo$ correspond to coordinates of a vector of predictors, that is
$
\bo=(\bo _{0}, \bo _1^T, \bo _2^T, \ldots, \bo _r^T
)^T,
$
where $\bo _{0} \in \mathbb{R}$ relates to an intercept and $\bo _k = (\bo _{1,k}, \bo _{2,k}, \bo _{3,k}, \ldots,
\bo _{p_k,k} )^T\in \mathbb{R}^{p_k}$ for  $k=1,\ldots, r.$ 
Note that (\ref{inverseLink}) is satisfied in particular by the Generalized Linear Models (GLMs) 
with a canonical link function
and a nonlinear regression with an additive error.

Moreover, we suppose that centred responses $\varepsilon_i = y_i - \mathbb{E}y_i$ have a {\it subgaussian distribution}
with the same  number $\sigma>0$, that is for $i=1,2,\dots,n$ and $u \in \mathbb{R}$ we have
\begin{equation}
\label{subgauss}
\mathbb{E}\exp(u\varepsilon_i) \leq \exp(\sigma^2u^2/2).
\end{equation}

\noindent {\bf Examples.} Two most important representatives of GLMs are the normal linear model and logistic regression. In the normal linear model 
$$
y_i=x_{i.}^T\mathring{\beta} + \varepsilon _i, \quad i=1,2,\dots,n,
$$
where the noise variables $\varepsilon_i$ are independent and normally distributed $N(0,\mathring{\sigma}^2).$ Therefore,  assumptions \eqref{inverseLink} and \eqref{subgauss}  are satisfied with   $\gamma(a)=a^2/2 $ and 
 any $\sigma\geq \mathring{\sigma},$ respectively. In logistic regression the response variable 
is dichotomous $y_i
\in \{0,1\}$ and we assume that 
$$
P(y_i=1) = \frac{\exp(x_{i.}^T\mathring{\beta})}{\exp(x_{i.}^T\mathring{\beta})+1}, \quad i=1,\ldots,n.
$$
In this model assumption \eqref{inverseLink} is satisfied with   $\gamma(a)=\log(1+\exp(a)).$ Finally, as $(\varepsilon_i)$ are bounded random variables,  then (\ref{subgauss}) is satisfied with any $\sigma\geq 1/2$.

\subsection{Notations}

Let  $W_k=diag(w_{1,k},\ldots, w_{p_k,k}),k=1,\ldots,r$ be diagonal nonrandom matrices with positive entries. Besides, $W=diag(W_1,\ldots, W_r)$ is a $p \times p$ diagonal matrix with matrices $W_k$ on the diagonal. Next, for $\beta \in \mathbb{R}^p$ and $q\geq 1$  
let $|\beta|_q = (\sum_{j=1}^p |\beta_j|^q)^{1/q}$ be the $\ell_q$ norm of $\beta$. The only  exception is  the $\ell _2$ norm, for which we will use the special notation $||\beta||.$

A feasible model is defined as a sequence $M=(P_1,P_2,\ldots,P_r).$ If $k$-th predictor is a factor, then $P_k$ is a particular partition of its levels. If $k$-th predictor is continuous, then $P_k \in \{\emptyset,\{k\}\}.$
Every $\beta \in \mathbb{R}^p$ determines a model $M_\beta$ as follows: if $k$-th predictor is a factor, then partition  $P_k$ depends on the fact, whether equalities of the form $\beta_{j,k}=0$ and/or $\beta_{j_1,k}=\beta_{j_2,k}$ are satisfied, which correspond to merging $j$-th level with a reference level and merging levels $j_1$ and $j_2,$ respectively.  If $k$-th predictor is continuous, then $P_k=\{k\}$ when $\beta_k \neq 0$ and $P_k=\emptyset$ otherwise.

In the following we consider $k \in \{1,\ldots, r\}$ and  $j \in \{1,\ldots,p_k\} .$  Let  $x_{j,k}$ be a column of $X$ corresponding to $j$-th level of $k$-th factor. The additional notations are $x_M=\max_{j,k} \; ||x_{j,k}||, x_m=\min_{j,k} \; ||x_{j,k}||, x_W=\max_{j,k}\; ||x_{j,k}||/w_{j,k}.$
Finally, $\Delta= \min\limits_{1\leq k\leq r} \;\;\min\limits _{0\leq j_1,j_2\leq p_k: \bo _{j_1,k} \neq \bo _{j_2,k}} |\bo _{j_1,k} - \bo _{j_2,k}|,$ where we set $\bo _{0,k}=0$ for $k=1,\ldots ,r.$

\subsection{The algorithm}
For estimation of  $\mathring{\beta}$ we consider a {\it loss function}
\begin{equation}
\label{loss}
\mathcal{L} (y, X\beta) \equiv \ell(\beta)=\sum_{i=1}^n [\gamma(x_{i.}^T\beta)-y_ix_{i.}^T\beta],
\end{equation}
which is a negative log-likelihood.
It is easy to see that $\dot \ell(\beta) =\sum_{i=1}^n [\dot \gamma(x_{i.}^T\beta)-y_i]x_{i.}$,
and consequently $\dot \ell(\mathring{\beta}) = -X^T\varepsilon$ for $\varepsilon=(\varepsilon_1,\ldots,\varepsilon_n)^T$. Next,
for $k=1,\ldots, r$ partial derivatives of $\ell (\beta)$ corresponding to coordinates of $\beta_k$ are denoted by $\dot \ell _k(\beta).$ 

We present GLAMER in Algorithm~\ref{alg:glamer}. It consists of three steps: the first one is the Group Lasso, where diagonal elements $(W_k)_{jj}=||x_{j,k}||$ play roles of weights. Such a choice of weights is explained in Proposition \ref{opt_w}. The second step concerns merging levels of each categorical predictor separately (for continuous predictors we just discard them or not). It is done using Group Lasso coefficients and threshold $\tau.$ More precisely, consider $k$-th categorical predictor. First, every level such that its corresponding Group Lasso coefficient is smaller than $\tau$ is merged with a reference level. Then Group Lasso coefficients corresponding to the remaining levels are sorted. Now merging levels into clusters is simple and follows the proof of Proposition \ref{prop_consist}. There  we establish that two Group Lasso coefficients corresponding to levels from the same cluster  do not differ more than  $2\tau,$ while those corresponding to distinct clusters have to differ more than   $2 \tau.$ 
Finally, we construct a new design matrix $Z$ in the following way: for every factor we add those columns of initial matrix $X,$ which correspond to the same clusters.  Then we calculate a maximum likelihood estimator using matrix $Z$.

\begin{algorithm}[tb]
   \caption{GLAMER}
   \label{alg:glamer}
 \textbf{Input}: $y, X, \lambda, \tau,  W_k$ for $k=1,\ldots,r,$ where by default $(W_k)_{jj}=||x_{j,k}||$ for $j=1,\ldots,p_k.$
\begin{algorithmic}
\STATE {\bfseries Step 1: Group Lasso}  
 \STATE   {$~~\widehat{\beta} = \text{argmin}_{\beta} \left\{ \mathcal{L} (y, X\beta) + \lambda 
\sum_{k=1}^r ||W_k \beta_k|| 
 \right\} $};
 \STATE {\bfseries Step 2: Merging}
\STATE {Start with the $n$-dimensional vector $Z$ of ones};
\STATE{{\bf for} $k=1$ {\bf to} $r$ {\bf do}}
\STATE{\quad Sort levels:
$\widehat \beta_{j_1,k} \leq
\widehat \beta_{j_2,k} \leq \ldots \leq \widehat \beta_{j_{p_k},k} ;  $
}
\STATE{\quad Merge levels into clusters: let $m=1$ and $C_m=\{j_1\}$
}
\STATE{\quad \quad {\bf for} $i=2$ {\bf to} $p_k$ {\bf do}
}
\STATE{\quad \quad  \quad {\bf if} $\widehat \beta _{j_i,k} - \widehat \beta _{j_{i-1},k} \leq  \tau$ {\bf then }
$C_m =C_m \cup \{j_i\}$ 
}
\STATE{ \quad \quad \quad {\bf else} $m=m+1$, $C_m=\{j_i\}$
}
\STATE{\quad \quad\bf{end for}}
\STATE{\quad Add merged columns of $X_k$ to $Z$:}
\STATE{\quad \quad {\bf for} $i=1$ {\bf to} $m$ {\bf do}}
\STATE{\quad \quad \quad $Z=[Z,\sum_{j \in C_i} X_{jk} ] $}
\STATE{\quad \quad\bf{end for}}
\STATE{\bf{end for}}
\STATE {\bfseries Step 3: Maximum likelihood estimation  }
\STATE {$\widehat{\beta}_{GLAMER} = \arg \min_\beta \mathcal{L} (y, Z\beta)$}
\end{algorithmic}
 \textbf{Output}:  $\widehat{\beta}_{GLAMER}, \widehat{M}_{GLAMER}=M_{\widehat{\beta}_{GLAMER}}$
\end{algorithm}

\section{Sufficient and necessary conditions for selection consistency of GLAMER}

We consider the GLAMER algorithm with arbitrary diagonal matrices $W_k.$ The default setting in Algorithm \ref{alg:glamer} will be justified in Proposition \ref{opt_w}.

First, we generalize a characteristic of linear models with continuous predictors, which quantifies the degree of separation
between  model $M_{\bo}$ and other models \citep{YeZhang10}.

Let $S=\{1 \leq k \leq r: \bo _k \neq 0\}$  and $\bar S = \{1, \ldots, r\} \setminus S$. 
Notice that $S$ need not coincide with $M_{\bo}.$
For $ a \in (0,1)$ and a diagonal matrix $W$ we define a cone 
\begin{eqnarray}
 \label{cone}
{\cal C}_{a,W}=\{v \in \mathbb{R}^p:  
\sum_{k \in \bar S} ||W_k v_k||  \leq  
\sum_{k \in S}  ||W_k v_k|| +  a |W v|_1\}.
\end{eqnarray} 
A Cone Invertibility Factor (CIF)  is defined as
 \begin{equation}
 \label{CIF}
 \zeta_{a,W}=\inf_{0 \neq \nu\in {\cal C}_{a,W}}\frac {\left|W^{-1}\left[\dot{\ell}(\bo + \nu) - \dot{\ell}(\bo )\right]\right|_\infty}
{|\nu|_\infty}\:.                                         
 \end{equation}
Notice that in the linear model the numerator of \eqref{CIF} is simply  $|W^{-1}X^TXv|_\infty.$ 
If all predictors are continuous, then \eqref{cone} and {CIF} are the same as a cone and CIF in \citet{YeZhang10}.

In the case $n>p$ one usually uses the minimal eigenvalue of the matrix $X ^TX$ to express the strength of correlations between predictors. Obviously, in the high-dimensional scenario this value is zero. 
Therefore, CIF can be viewed as a useful analog of the minimal eigenvalue for the case $p>n.$
In comparison to more popular restricted eigenvalues \citep{BickelEtAl09} or compatibility constants \citep{GeerBuhlmann09}, CIF enables sharper 
$\ell_\infty$ estimation error bounds    \citep{YeZhang10, HuangZhang12, ZhangZhang12}. 

In the following result we investigate an estimation error of the Group Lasso.

\begin{lemma}
\label{lem1}
Suppose that assumptions \eqref{inverseLink}, \eqref{subgauss} are satisfied and  $a \in (0,1).$  
Then
 \begin{equation*}
 \mathbb{P_{\bo}} \left(|\hat{\beta}-\mathring{\beta}|_\infty > (1+a)\lambda\zeta_{a,W}^{-1} \right) \leq 2p\exp\bigg(-\frac{a^2\lambda^2}{2\sigma^2 x^2_W}
 \bigg).
 \end{equation*}
\end{lemma}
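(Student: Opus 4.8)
My plan is to follow the standard route for $\ell_\infty$ bounds through a cone invertibility factor (as in \citet{YeZhang10}), adapted to the weighted group penalty. I would write $\nu = \hat{\beta} - \bo$ and work on the ``good'' event
\[
\mathcal{A} = \left\{ \left| W^{-1}\dot\ell(\bo) \right|_\infty \le a\lambda \right\}.
\]
First I would dispose of the probability. Since $\dot\ell(\bo) = -X^T\varepsilon$, the coordinate of $W^{-1}\dot\ell(\bo)$ indexed by $(j,k)$ equals $-x_{j,k}^T\varepsilon/w_{j,k}$, which by \eqref{subgauss} is subgaussian with parameter at most $\sigma^2\|x_{j,k}\|^2/w_{j,k}^2 \le \sigma^2 x_W^2$; a subgaussian tail estimate together with a union bound over the $p$ coordinates then yields exactly $\mathbb{P}_{\bo}(\mathcal{A}^c) \le 2p\exp(-a^2\lambda^2/(2\sigma^2 x_W^2))$. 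It therefore remains to show that on $\mathcal{A}$ the deterministic inequality $|\nu|_\infty \le (1+a)\lambda\zeta_{a,W}^{-1}$ holds.

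On $\mathcal{A}$ I would assemble two ingredients. (i) KKT conditions for Step~1: the subdifferential of $\beta_k \mapsto \|W_k\beta_k\|$ is $\{W_k u_k : \|u_k\| \le 1\}$ (with $u_k = W_k\beta_k/\|W_k\beta_k\|$ when $W_k\beta_k \ne 0$), so optimality of $\hat{\beta}$ gives $\|W_k^{-1}\dot\ell_k(\hat{\beta})\| \le \lambda$ for every $k$, hence $|W^{-1}\dot\ell(\hat{\beta})|_\infty \le \lambda$; combined with $\mathcal{A}$ and the triangle inequality this gives $|W^{-1}[\dot\ell(\bo+\nu) - \dot\ell(\bo)]|_\infty \le (1+a)\lambda$. (ii) Cone membership $\nu \in \mathcal{C}_{a,W}$: from $\ell(\hat{\beta}) + \lambda\sum_k\|W_k\hat{\beta}_k\| \le \ell(\bo) + \lambda\sum_k\|W_k\bo_k\|$ and convexity of $\ell$ (so $\langle\dot\ell(\bo),\nu\rangle \le \ell(\hat{\beta})-\ell(\bo)$), using $|\langle\dot\ell(\bo),\nu\rangle| = |\langle W^{-1}\dot\ell(\bo),W\nu\rangle| \le a\lambda|W\nu|_1$ on $\mathcal{A}$, together with $\|W_k\hat{\beta}_k\| = \|W_k\nu_k\|$ for $k\in\bar S$ and $\|W_k\hat{\beta}_k\| \ge \|W_k\bo_k\| - \|W_k\nu_k\|$ for $k\in S$, the penalty terms rearrange (after cancelling $\sum_{k\in S}\|W_k\bo_k\|$) to precisely $\sum_{k\in\bar S}\|W_k\nu_k\| \le \sum_{k\in S}\|W_k\nu_k\| + a|W\nu|_1$, i.e.\ $\nu\in\mathcal{C}_{a,W}$.

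Finally I would combine the two ingredients: if $\nu\ne0$, definition \eqref{CIF} of the CIF gives $\zeta_{a,W}|\nu|_\infty \le |W^{-1}[\dot\ell(\bo+\nu)-\dot\ell(\bo)]|_\infty \le (1+a)\lambda$, so $|\nu|_\infty \le (1+a)\lambda\zeta_{a,W}^{-1}$; for $\nu=0$ the bound is immediate. I expect the probabilistic step and the CIF step to be routine; the main care needed is bookkeeping — phrasing the KKT bound in the $W^{-1}$-rescaled $\ell_\infty$ norm so that it matches the numerator of $\zeta_{a,W}$, and verifying that the penalty rearrangement reproduces exactly the cone $\mathcal{C}_{a,W}$, including the slack term $a|Wv|_1$, which in this proof arises from the noise bound defining $\mathcal{A}$ rather than being inserted by hand.
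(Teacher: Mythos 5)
Your proof is correct and follows the same overall architecture as the paper's: the good event $\mathcal{A}=\{|W^{-1}\dot\ell(\bo)|_\infty\le a\lambda\}$, cone membership of $\nu=\hat{\beta}-\bo$, the CIF definition to convert the gradient bound into an $\ell_\infty$ bound, the KKT consequence $|W^{-1}\dot\ell(\hat{\beta})|_\infty\le\lambda$, and a subgaussian tail plus union bound for $\mathbb{P}_{\bo}(\mathcal{A}^c)$ are all exactly the steps the authors take. The one genuine difference is how you establish $\nu\in\mathcal{C}_{a,W}$. The paper starts from the gradient-monotonicity inequality $0\le\nu^T[\dot\ell(\hat{\beta})-\dot\ell(\bo)]$ and splits the inner product over $\bar S$, $S$ and all $k$, invoking the KKT conditions twice: once to evaluate $\sum_{k\in\bar S}\hat{\beta}_k^T\dot\ell_k(\hat{\beta})$ exactly as $-\lambda\sum_{k\in\bar S}\|W_k\nu_k\|$, and once with Cauchy--Schwarz to bound the $S$-terms. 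You instead use the basic inequality $\ell(\hat{\beta})+\lambda\sum_k\|W_k\hat{\beta}_k\|\le\ell(\bo)+\lambda\sum_k\|W_k\bo_k\|$ together with the first-order convexity bound $\ell(\hat{\beta})-\ell(\bo)\ge\langle\dot\ell(\bo),\nu\rangle$ and the triangle inequality on the penalty. Both rearrangements produce precisely the cone $\mathcal{C}_{a,W}$ with the slack $a|W\nu|_1$ supplied by the event $\mathcal{A}$; your route avoids any appeal to stationarity in the cone step (so it would extend verbatim to approximate minimizers), while the paper's route reuses the KKT machinery it needs anyway for the numerator bound. Both proofs are equally casual about the unpenalized intercept coordinate, so that is not a gap specific to your argument.
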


\begin{proof}
For $k=1,\ldots, r$ using KKT for  the Group Lasso estimator $\hat{\beta},$ we have that $W_k^{-1} \ml _k (\hat{\beta})= 
 -\lambda W_k \hbeta _{k}/||W_k \hbeta _k|| $ for $\hbeta _k \neq 0$ and $||W_k^{-1} \ml _k (\hat{\beta})|| \leq
 \lambda $ for $\hbeta _k = 0.$ Therefore, we obtain $|W^{-1} \ml  (\hat{\beta})|_\infty= \max _{k} |W_k^{-1} \ml _k (\hat{\beta})|_\infty \leq \lambda.$

Recall that $\ml (\bo)=-X^T\varepsilon$ and
suppose that we are on  event $\mathcal{A}=\{|W^{-1}\ml (\bo)|_\infty\leq a\lambda\}.$
First,  we  prove that $v:=\hat \beta - \bo \in {\cal C}_{a,W}.$ Using convexity and differentiability of $\ell,$ from Taylor's expanssion we have $0 \leq v ^T \left[  \dot \ell (\hbeta) - \dot \ell  (\bo)\right].$
Since $v _k =\hbeta _k$ for $k \in \bar S,$ we obtain
\begin{eqnarray}
\label{form1}
0 &\leq& v ^T \left[  \dot \ell (\hbeta) - \dot \ell  (\bo)\right]=
\sum_{k=1}^r v _k ^T \dot \ell _k (\hbeta) - \sum_{k=1}^r v _k ^T \dot \ell _k (\bo) \nonumber \\ 
&=& \sum_{k \in \bar S} \hbeta _k ^T \dot \ell _k (\hbeta) + \sum_{k \in S} v _k ^T \dot \ell _k (\hbeta) 
 - \sum_{k=1}^r v _k ^T \dot \ell _k (\bo).
\end{eqnarray}
Consider the first term in \eqref{form1}. Using KKT, it equals
\begin{eqnarray*}
 \sum_{k \in \bar S,\hbeta _{k} \neq 0} 
 \hbeta _{k} ^T \dot \ell _{k} (\hbeta)
= -\lambda \sum_{k \in \bar S,\hbeta _{k} \neq 0}  \;||W_k\hbeta _k||  =
-\lambda \sum_{k \in \bar S}  \;||W_kv _k||.
\end{eqnarray*}
Similarly, we bound the second term in \eqref{form1} by
\begin{eqnarray*}
 \sum_{k \in  S} 
 ||W_k v _k||\; || W_k ^{-1}\dot \ell _{k} (\hbeta)||  \leq \lambda \sum_{k \in  S}||W_k v _k||.
\end{eqnarray*}
The last term in \eqref{form1} can be bounded using the fact that we are on  event $\mathcal{A}$ 
$$
\sum_{k=1}^r |W_k v _k|_1 |W_k^{-1} \dot \ell _k (\bo)|_\infty 
\leq a\lambda \sum_{k =1}^r |W_k v _k|_1.
$$
Joining the above facts we get that $v \in  {\cal C}_{a,W}.$
Therefore, from the definition \eqref{CIF} we have
\begin{eqnarray*}
&\,& \zeta_{a,W}|\hbeta - \bo|_\infty \leq 
\max\limits_{1\leq k \leq r} |W_k^{-1}\dot \ell _k (\hbeta)
-W_k ^{-1} \dot \ell _k (\mathring{\beta})|_\infty  \\
&\,&\leq \max_{1\leq k \leq r} 
 |W_k ^{-1}\dot \ell _k (\hbeta)|_\infty
+\max_{1\leq k \leq r} 
|W_k^{-1} \dot \ell _k (\mathring{\beta})|_\infty .
\end{eqnarray*}
Using again KKT and the fact, that we are on $\mA ,$ we get $|\hat{\beta}-\mathring{\beta}|_\infty \leq (1+a)\lambda\zeta_{a,W}^{-1}.$ Now we calculate probability of event~$\mA.$
To do it, we use the following exponential inequality for independent subgaussian variables $\varepsilon_i, i=1,\ldots ,n$: for each $b>0$ and $v\in \mathbb{R}^n$ we have $P(\varepsilon^Tv/||v||>b) \leq \exp\left(-b^2/(2\sigma^2)\right).$ 
Using union bounds and the definition of $x_W,$ we obtain
\begin{eqnarray*}
&&P_{\bo}(\mA ^c) \leq \sum_{k,j} P\left(|x_{j,k}^T \varepsilon|/w_{j,k} > a\lambda \right) \\
&&\leq 2 \sum_{j,k} \exp \left(
-\frac{a^2 \lambda^2 w_{j,k}^2}{2 \sigma^2 ||x_{j,k}||^2} 
\right) 
 \leq
2p  \exp \left(
-\frac{a^2 \lambda^2 }{2 \sigma^2 x_W^2} \right),
\end{eqnarray*}
where we consider $k \in \{1,\ldots, r\}$ and  $j \in \{1,\ldots,p_k\}. $ 
\end{proof}

The next fact is a simple consequence of Lemma \ref{lem1}.

\begin{proposition}
\label{prop_est}
Under assumptions of Lemma \ref{lem1} and $\lambda ^2= 2 a^{-2}\sigma^2x^2_W \log(2p/\alpha)$ for some $\alpha \in (0,1)$ we have that with probability at least $1-\alpha$
$$
|\hat{\beta}-\mathring{\beta}|^2_\infty/\sigma^2 \leq 2(1+a)^2a^{-2} x_W^2 \zeta_{a,W}^{-2} \log(2p/\alpha). 
$$
\end{proposition}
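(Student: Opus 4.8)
The plan is to derive Proposition~\ref{prop_est} directly from Lemma~\ref{lem1} by choosing $\lambda$ so that the right-hand side of the probability bound in the lemma equals $\alpha$, and then rewriting the resulting deterministic bound. Concretely, I would start from the inequality
\[
\mathbb{P}_{\bo}\left(|\hat\beta-\bo|_\infty > (1+a)\lambda\zeta_{a,W}^{-1}\right)\leq 2p\exp\!\left(-\frac{a^2\lambda^2}{2\sigma^2x_W^2}\right),
\]
and set the exponential term equal to $\alpha/(2p)$, i.e. impose $a^2\lambda^2/(2\sigma^2x_W^2)=\log(2p/\alpha)$. Solving for $\lambda^2$ gives exactly the prescribed choice $\lambda^2=2a^{-2}\sigma^2x_W^2\log(2p/\alpha)$, so that the failure probability is at most $\alpha$.

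Next I would plug this value of $\lambda$ into the event $\{|\hat\beta-\bo|_\infty\leq (1+a)\lambda\zeta_{a,W}^{-1}\}$, which holds with probability at least $1-\alpha$. Squaring and dividing by $\sigma^2$ turns this into
\[
\frac{|\hat\beta-\bo|_\infty^2}{\sigma^2}\leq (1+a)^2\lambda^2\zeta_{a,W}^{-2}\sigma^{-2}
= (1+a)^2\zeta_{a,W}^{-2}\sigma^{-2}\cdot 2a^{-2}\sigma^2 x_W^2\log(2p/\alpha),
\]
and the $\sigma^2$ cancels, leaving the claimed bound $2(1+a)^2a^{-2}x_W^2\zeta_{a,W}^{-2}\log(2p/\alpha)$. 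That is the entire argument.

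There is essentially no obstacle here: the proposition is a bookkeeping corollary of Lemma~\ref{lem1}, and the only thing to be careful about is that the prescribed $\lambda$ indeed makes the bracketed exponent equal $\log(2p/\alpha)$ (up to the factor $2p$ in front), and that $\alpha\in(0,1)$ guarantees $\log(2p/\alpha)>0$ so that $\lambda$ is well-defined and positive. I would therefore present the proof as a short two-line derivation rather than anything requiring new ideas.

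\begin{proof}
Apply Lemma~\ref{lem1}. With the stated choice $\lambda^2=2a^{-2}\sigma^2 x_W^2\log(2p/\alpha)$ we have $a^2\lambda^2/(2\sigma^2 x_W^2)=\log(2p/\alpha)$, hence
\[
2p\exp\!\left(-\frac{a^2\lambda^2}{2\sigma^2 x_W^2}\right)=2p\cdot\frac{\alpha}{2p}=\alpha,
\]
so with probability at least $1-\alpha$ we have $|\hat\beta-\bo|_\infty\leq (1+a)\lambda\zeta_{a,W}^{-1}$. Squaring this inequality and dividing by $\sigma^2$ gives
\[
\frac{|\hat\beta-\bo|_\infty^2}{\sigma^2}\leq \frac{(1+a)^2\lambda^2}{\sigma^2\zeta_{a,W}^2}
=\frac{(1+a)^2}{\sigma^2\zeta_{a,W}^2}\cdot 2a^{-2}\sigma^2 x_W^2\log(2p/\alpha)
=2(1+a)^2a^{-2}x_W^2\zeta_{a,W}^{-2}\log(2p/\alpha),
\]
which is the asserted bound.
\end{proof}
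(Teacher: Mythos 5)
Your proposal is correct and matches the paper's intent exactly: the paper presents Proposition~\ref{prop_est} as ``a simple consequence of Lemma~\ref{lem1}'' without writing out the details, and your two-line derivation (choosing $\lambda$ so the exponential bound equals $\alpha/(2p)$, then squaring and dividing by $\sigma^2$) is precisely that consequence.
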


In Lemma \ref{lem1} and Proposition \ref{prop_est} we establish an upper bound on the estimation error of the Group Lasso. It works for subgaussian GLMs and the high-dimensional scenario  $p>>n.$ Similar results can be found in the literature, for instance in \citet[Theorem 8.1]{BuhlmannGeer11}, \citet[Theorem III.6]{Blazere2014} or \citet[Theorem 5.1]{Lounici_etal2011}. The main difference between those results and ours is that we measure the estimation error in the $l_\infty$ norm, which is all we need to prove selection consistency, while in those papers we have a mixture of $l_2$ and $l_1$ (or $l_\infty$) norms. Hence their bounds contain unnecessary terms, which makes their assumptions more restrictive, for instance
  a sum of levels of categorical predictors corresponding to $M_{\bo}.$ Moreover, \citet{BuhlmannGeer11} use the same weights as in the original version of the Group Lasso \citep{groupLasso}. As we explain after Proposition \ref{opt_w}, such weights lead to sub-optimal bounds.  

\subsection{Sufficient conditions for selection consistency}

\begin{proposition}
\label{prop_consist}
Under assumptions of Proposition \ref{prop_est}, if $\alpha=o(1)$ and 
\begin{equation}
\label{ass_delta}
8(1+a)^2a^{-2} x_W ^2\zeta_{a,W}^{-2} \log p (1+o(1))\leq  \tau ^2/\sigma^2 < \Delta^2 /(4\sigma^2)
\end{equation}
 we have that $P_{\bo} (\hat M _{GLAMER} \neq M_{\bo})=o(1).$
\end{proposition}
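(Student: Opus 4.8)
The plan is to condition on the high-probability event from Proposition~\ref{prop_est}, namely that $|\hat\beta - \bo|_\infty \le (1+a)\lambda\zeta_{a,W}^{-1}$, and to show that on this event the merging step of Algorithm~\ref{alg:glamer} reconstructs exactly the partition encoded by $\bo$, after which Step~3 simply refits on the correct design and the recovered model $\hat M_{GLAMER}$ coincides with $M_{\bo}$. Write $\eta := (1+a)\lambda\zeta_{a,W}^{-1}$ for the $\ell_\infty$ estimation radius; the choice of $\lambda$ in Proposition~\ref{prop_est} together with the left-hand inequality in \eqref{ass_delta} gives $2\eta \le \tau(1+o(1))$ and $4\eta < \Delta(1+o(1))$, so that, up to the vanishing terms, $2\eta \le \tau$ and $\tau < \Delta/2$. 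These two numerical facts are the whole engine of the argument.

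First I would treat a single factor $k$. Fix two levels $j_1, j_2$ (allowing the reference level $0$, with $\bo_{0,k}=0$ and $\hat\beta_{0,k}=0$ by convention). If $\bo_{j_1,k} = \bo_{j_2,k}$, then $|\hat\beta_{j_1,k} - \hat\beta_{j_2,k}| \le 2\eta \le \tau$; if $\bo_{j_1,k} \ne \bo_{j_2,k}$, then $|\hat\beta_{j_1,k} - \hat\beta_{j_2,k}| \ge \Delta - 2\eta > \tau$ since $\tau < \Delta/2 \le \Delta - 2\eta$. Thus after sorting the $\hat\beta_{j,k}$, two coefficients lie within $\tau$ of each other \emph{whenever} they come from the same true cluster, and any two coefficients from different true clusters differ by strictly more than $\tau$ — in fact by more than $2\tau$, which is the stronger statement quoted in the paper. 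Because the true clusters are separated by gaps exceeding $\tau$ while within each cluster consecutive sorted estimates are within $2\eta \le \tau$, the single-linkage sweep in Step~2 (open a new cluster exactly when a consecutive gap exceeds $\tau$) cannot split a true cluster and cannot merge two true clusters; hence it returns precisely the partition $P_k$ determined by $\bo$. The levels merged to the reference are handled by the same inequalities with $j_2 = 0$: a level with $\bo_{j,k}=0$ has $|\hat\beta_{j,k}| \le \eta < \tau$ (and one checks $\eta \le \tau/2 < \tau$), so it is thresholded into the reference cluster, while a level with $\bo_{j,k} \ne 0$ has $|\hat\beta_{j,k}| \ge \Delta - \eta > \tau$ and survives. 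Taking the union over $k = 1, \dots, r$ — including the continuous predictors, for which the same threshold-vs-$\Delta$ comparison decides leave-or-delete — shows $M_{\hat\beta}$ restricted to the merging output equals $M_{\bo}$.

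Finally, since Step~3 only refits coefficients on the design $Z$ built from the already-correct partition, and $M_{\hat\beta_{GLAMER}}$ is read off from that partition (generic refitted coefficients will not accidentally satisfy further equalities, or one argues this is immaterial because the partition itself is what defines $\hat M_{GLAMER}$), we conclude $\hat M_{GLAMER} = M_{\bo}$ on the event of Proposition~\ref{prop_est}. That event has probability at least $1 - \alpha$, and $\alpha = o(1)$ by hypothesis, giving $P_{\bo}(\hat M_{GLAMER} \ne M_{\bo}) \le \alpha = o(1)$. The only delicate point is bookkeeping the $(1+o(1))$ factors: one must verify that the slack between $8(1+a)^2 a^{-2} x_W^2 \zeta_{a,W}^{-2}\log p$ and the true radius $\eta^2 = (1+a)^2\lambda^2\zeta_{a,W}^{-2}$ with $\lambda^2 = 2a^{-2}\sigma^2 x_W^2\log(2p/\alpha)$ is exactly what absorbs the difference between $\log p$ and $\log(2p/\alpha)$ as $\alpha \to 0$, so that the clean inequalities $2\eta \le \tau$, $\eta \le \tau/2$, $\tau < \Delta/2$, $4\eta < \Delta$ all hold for $n$ large; this is where I expect the main (though routine) obstacle to lie, and it is the reason the statement carries the $(1+o(1))$ and the $\alpha = o(1)$ qualifiers rather than being a clean finite-sample bound.
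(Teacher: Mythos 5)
Your proposal is correct and follows essentially the same route as the paper: condition on the event of Proposition~\ref{prop_est} so that $|\hat\beta-\bo|_\infty\le\tau/2<\Delta/4$, then derive the two separation inequalities (same true cluster $\Rightarrow$ gap $\le\tau$, different true clusters $\Rightarrow$ gap $\ge\Delta-\tau>\tau$) that guarantee the sorting-and-merging sweep recovers $M_{\bo}$; you simply add more explicit bookkeeping of the $(1+o(1))$ factors and the reference-level case. One tiny quibble: your parenthetical claim that distinct clusters differ by more than $2\tau$ does not follow from $\tau<\Delta/2$ alone (it would need $\Delta>3\tau$), but only the $>\tau$ bound is actually used, so the argument stands.
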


\begin{proof}
From Proposition \ref{prop_est} and \eqref{ass_delta} we know that $|\hat{\beta}-\mathring{\beta}|_\infty \leq \tau/2 < \Delta/4.$ Now we fix the $k$-th predictor and take indexes $j_1,j_2$ such that $\bo _{j_1,k}= \bo _{j_2,k},$ i.e. they correspond to the same cluster. We obtain
\begin{equation}
\label{sep1}|\hbeta _ {j_1,k} - \hbeta_{j_2,k}|\leq |\hbeta _ {j_1,k} - \bo _{j_1,k}| + |\hbeta _ {j_2,k} - \bo _{j_2,k}| 
\leq \tau.
\end{equation}
On the other hand, if $j_1,j_2$ are such that
$\bo _{j_1,k} \neq \bo _{j_2,k}$, then
\begin{eqnarray}\label{sep2}
&&|\hbeta _ {j_1,k} - \hbeta_{j_2,k}|  
\geq |\bo _{j_1,k} -\bo _{j_2,k}| - |\hbeta _ {j_1,k} - \bo _{j_1,k}| - |\hbeta _ {j_2,k} - \bo _{j_2,k}|\nonumber\\
&&\geq \Delta  -\tau >\tau.
\end{eqnarray}
Inequalities \eqref{sep1} and \eqref{sep2} state that Algorithm \ref{alg:glamer} correctly merges levels into clusters.  
\end{proof}

For $p \leq n$  consistency of partition selection was obtained in \citet{bondell2009, gertheiss2010sparse, oelker14, MajK2015} or \citet{stokell2021}.  To find out how restrictive the assumptions in Proposition \ref{prop_consist} are, we will later compare them to necessary conditions for selection consistency, which are given in  Proposition \ref{prop_necess}. 

To our knowledge, Proposition \ref{prop_consist} is the first result, which establishes consistency in partition selection  for  high-dimensional GLMs with categorical predictors.
  There are some papers, which prove ``group'' selection consistency for $p>>n$, because selection is understood as finding relevant groups of predictors, for instance \citet{NardiRinaldo2008, Lounici_etal2011, Blazere2014}. To be more precise, they are able only to find set $S$ instead of $M_{\bo},$ because the Group Lasso does not merge levels of factors.

We can write \eqref{ass_delta} in the simplified form $\Delta^2/\sigma^2 \succsim 128 x^2_W \zeta_{a,W}^{-2} \log p.$ Therefore, the estimation error in Proposition \ref{prop_est} and 
the sufficient conditions for selection consistency of GLAMER depend on a choice of weights.  We see that  to find optimal weights we should minimize $ x^2_W \zeta_{a,W}^{-2}.$ Solving this problem in the general case is difficult, so we restrict to the simplified version of the problem in the next result.
\begin{proposition}
\label{opt_w}
Consider  a linear model with an orthogonal design, i.e. $X^TX$ is orthogonal, and weights of the form $w_{j,k}=||x_{j,k}||^q$ for  $q \in \mathbb{R}.$ Then for each $a \in (0,1)$ we have
$$
x^2_W \zeta_{a,W}^{-2}\leq x^{-2}_m (x_M/x_m)^{\max(0,|2q-3|-1)}=:f(q)
$$
and $\arg \min_q f(q)=[1,2].$
\end{proposition}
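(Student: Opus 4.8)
The plan is to compute the two ingredients $x_W^2$ and $\zeta_{a,W}$ explicitly in the orthogonal-design case, express their combination as a function of $q$, and then optimize. First I would handle the numerator $x_W^2 = \max_{j,k} \|x_{j,k}\|^2/w_{j,k}^2$. With the chosen weights $w_{j,k} = \|x_{j,k}\|^q$, we get $\|x_{j,k}\|/w_{j,k} = \|x_{j,k}\|^{1-q}$, so $x_W = \max_{j,k} \|x_{j,k}\|^{1-q}$, which equals $x_M^{1-q}$ if $q \le 1$ and $x_m^{1-q}$ if $q \ge 1$. In both cases $x_W = x_M^{\max(0,1-q)} x_m^{\min(0,1-q)}$, or more cleanly $x_W^2 = x_m^{2(1-q)} (x_M/x_m)^{2\max(0,1-q)}$.

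Second, and this is the part that carries the real content, I would lower-bound the CIF $\zeta_{a,W}$ using orthogonality. In a linear model the numerator of \eqref{CIF} is $|W^{-1}X^TX\nu|_\infty$; since $X^TX$ is a diagonal matrix with entries $\|x_{j,k}\|^2$ (this is what ``orthogonal'' must mean here — orthogonal columns), we have $W^{-1}X^TX\nu$ having coordinates $\|x_{j,k}\|^{2-q}\nu_{j,k}$. Hence $|W^{-1}X^TX\nu|_\infty \ge (\min_{j,k}\|x_{j,k}\|^{2-q})\,|\nu|_\infty$ when $q\le 2$ and $\ge (\max_{j,k}\|x_{j,k}\|^{2-q})\,|\nu|_\infty$ when... no — I need to be careful, since the minimum of $\|x_{j,k}\|^{2-q}\nu_{j,k}$ over coordinates is attained at the coordinate achieving $|\nu|_\infty$ only if the weight there is smallest. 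The correct bound is $|W^{-1}X^TX\nu|_\infty \ge (\min_{j,k}\|x_{j,k}\|^{2-q})\,|\nu|_\infty$ always when $2-q\ge 0$, i.e. $\min = x_m^{2-q}$; and $\ge x_M^{2-q}|\nu|_\infty$ when $2-q < 0$. Either way $\zeta_{a,W} \ge x_m^{2-q}$ for $q\le 2$ and $\ge x_M^{2-q}$ for $q \ge 2$, which combines to $\zeta_{a,W} \ge x_m^{(2-q)} (x_M/x_m)^{\min(0,2-q)} = x_m^{2-q}(x_m/x_M)^{\max(0,q-2)}$. Note this bound does not depend on $a$, so the infimum over the cone is harmlessly replaced by the infimum over all of $\mathbb{R}^p$.

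Third I would assemble: $x_W^2 \zeta_{a,W}^{-2} \le x_m^{2(1-q)}(x_M/x_m)^{2\max(0,1-q)} \cdot x_m^{-2(2-q)}(x_M/x_m)^{2\max(0,q-2)} = x_m^{-2}(x_M/x_m)^{2\max(0,1-q)+2\max(0,q-2)}$. It remains a short case analysis on the exponent $g(q):=\max(0,1-q)+\max(0,q-2)$ to check it equals $\tfrac12\max(0,|2q-3|-1)$: for $q\in[1,2]$ both maxima vanish, giving $0$; for $q<1$ it is $1-q$ and $|2q-3|-1 = (3-2q)-1 = 2-2q = 2(1-q)$; for $q>2$ it is $q-2$ and $|2q-3|-1 = (2q-3)-1 = 2q-4 = 2(q-2)$. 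So $2g(q) = \max(0,|2q-3|-1)$ as claimed, the bound is $f(q) = x_m^{-2}(x_M/x_m)^{\max(0,|2q-3|-1)}$, and since $x_M/x_m\ge 1$ the minimizer is exactly the set where the exponent vanishes, namely $|2q-3|\le 1$, i.e. $q\in[1,2]$.

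The main obstacle is pinning down the direction of the inequalities in the CIF lower bound: one must argue carefully that replacing each coordinate's weight $\|x_{j,k}\|^{2-q}$ by the worst-case (smallest for $q\le2$, largest for $q\ge2$) legitimately lower-bounds $|W^{-1}X^TX\nu|_\infty/|\nu|_\infty$, and that this worst case is simultaneously consistent with the worst case in $x_W$ so that the product bound $f(q)$ is actually attained in the stated form. Everything else — the Taylor/KKT machinery — is already supplied by Lemma~\ref{lem1}, and the final optimization is the elementary piecewise-linear exponent computation above.
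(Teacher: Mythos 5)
Your proposal is correct and follows essentially the same route as the paper: lower-bound $\zeta_{a,W}$ by $x_m^{2-q}$ for $q\le 2$ and $x_M^{2-q}$ for $q>2$ using the diagonal structure of $W^{-1}X^TX$, compute $x_W^2$ as $x_M^{2-2q}$ or $x_m^{2-2q}$ according to the sign of $1-q$, and combine via the piecewise-linear exponent calculation. The paper states these two bounds without the coordinatewise justification you supply, so your write-up is simply a more detailed version of the same argument.
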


\begin{proof}
For a linear model with an orthogonal design we can easily bound from above $
\zeta_{a,W}^{-2}
$ by $x_m^{2q-4},$ when $q\leq 2$ and $x_M^{2q-4},$ when $q> 2.$  The rest of the proof follows from the fact that $x^2_W$ equals $x_M^{2-2q},$ when $q\leq 1$ and $x_m^{2-2q},$ when $q> 1.$
\end{proof}
Thus, for an orthogonal design with the optimal weights the sufficient condition for selection consistency of GLAMER is $\Delta^2/\sigma^2 \succsim 128 x^{-2}_m \log p.$
The assumption that a design is orthogonal is quite restrictive. The much more common case, especially for $p>>n,$ is an {\it almost orthogonal} design, i.e. ${x_{j_1,k_1}}^Tx_{j_2,k_2}=o(x_m^2)$ for $(j_1,k_1) \neq (j_2,k_2).$ In such a case weights 
$w_{j,k}=||x_{j,k}||^q$ for  $q \in [1,2]$ can be treated as {\it almost optimal}.

Consider a linear model with only one categorical predictor. It is a simple example of an orthogonal design. It is also a case that $p \leq n,$ so we can compare  results from \citet{MajK2015, stokell2021} to ours.
For the choice of weights as in Proposition \ref{opt_w} the sufficient condition for selection consistency of GLAMER is $\Delta^2/\sigma^2 \succsim 128 x^{-2}_m  \log p.$ Results obtained in those papers differ only in constants, namely one has 32 in \citet[Theorem 1]{MajK2015}  and 420 in \citet[Theorem 5]{stokell2021}.

Finally, we discuss commonly used weights in the Group Lasso penalty. In the original paper on the Group Lasso \citep{groupLasso} two choices of weights are proposed. The first one, called ``obvious'', gives a penalty 
of the form $\lambda \sum_k ||\beta_k||.$ In the second one, called ``preferred'' they have a penalty  $\lambda \sum_k \sqrt{p_k} ||\beta_k||.$ The latter choice is more widely used in the literature \citep{BuhlmannGeer11}.
Now we compare these choices of weights to those obtained in Proposition \ref{opt_w}. Notice that columns of $X$ are normalized in \citet{groupLasso}, which is not done in our paper. So, we start with writing their penalty in our setting.
We do it under a {\it balanced design}, i.e. there are $n/p_k$ observations on every level of $k$-th categorical predictor. Their first choice gives a penalty $ \lambda \sqrt{n} \sum_k p_k^{-1/2} ||\beta_k||,$ while the second one gives $ \lambda \sqrt{n} \sum_k  ||\beta_k||.$ On the other hand, by Proposition \ref{opt_w} for $q=1$ we obtain
$ \lambda \sqrt{n} \sum_k p_k^{-1/2} ||\beta_k||,$ while for $q=2$ we have $ \lambda n \sum_k p_k^{-1} ||\beta_k||.$
Therefore, our optimal choice for $q=1$ coincides with the ``obvious'' choice in \citet{groupLasso}. However, the ``preferred'' choice in \citet{groupLasso}, which is widely used in practice, leads to sub-optimal results. Obviously, 
Proposition \ref{opt_w} deals with an orthogonal design, so our result is rather a starting point of the thorough analysis on weights optimality. 

\subsection{Necessary  conditions for selection consistency}

First, we need a few additional definitions. We omit $\bo _{0,1} $ in $\bo _1,$ which means that we consider a model without an intercept. Then we consider the following set of vectors $B(\bo)=\{\bo,\beta^{(j,k)}: k=1,\ldots,r,j=1,\ldots,p_k\}$, where vectors $\beta^{(j,k)} \in \mathbb{R}^p$ are defined as follows:  $\beta^{(j,k)}_{j_1,k_1} = 
\bo _{j_1,k_1} + \Delta I(j_1=j,k_1=k)$ for $k,k_1=1,\ldots,r, j=1,\ldots,p_k, j_1=1,\ldots,p_{k_1}.$ Therefore, vector $\beta^{(j,k)} $ differs minimally  from $\bo$ only on one coordinate, which corresponds to indexes $(j,k).$ 
\begin{proposition}
\label{prop_necess}
Consider the linear model with $\varepsilon \sim N(0,\sigma^2 I_n).$ Then for every selector $\hat M$ we have: if $\max_{\beta \in B(\bo)} P_\beta( \hat M \neq M_\beta)=o(1),$ then $ \Delta^2/\sigma^2 \geq x^{-2}_M \log p (1-o(1))/2.$ 
\end{proposition}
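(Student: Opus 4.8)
The plan is to convert selection consistency over the finite family $B(\bo)$ into a multiple-hypothesis testing problem and then apply Fano's inequality. In the Gaussian model, replacing $\bo$ by $\beta$ makes the data law $P_\beta=N(X\beta,\sigma^2 I_n)$, so all Kullback--Leibler divergences are explicit --- which is exactly why the statement is specialized to $\varepsilon\sim N(0,\sigma^2 I_n)$.

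First I would record two elementary facts about $B(\bo)$. Each $\beta^{(j,k)}$ differs from $\bo$ only at coordinate $(j,k)$, by exactly $\Delta$; since $\Delta$ is the minimal nonzero gap among the coordinates of $\bo$ (with the reference level fixed at $0$), this shift destroys or creates at least one of the equalities defining the partition, so $M_{\beta^{(j,k)}}\neq M_{\bo}$, and one checks in the same way that the $p$ models $\{M_\beta:\beta\in B(\bo)\}$ are pairwise distinct. Hence any selector $\hat M$ induces an index estimator $\hat\psi$ --- return the unique $\beta\in B(\bo)$ with $M_\beta=\hat M$, and an arbitrary index if there is none --- whose error probability under $P_\beta$ is at most $P_\beta(\hat M\neq M_\beta)$. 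Therefore the worst-case error of $\hat\psi$ over $B(\bo)$ is $\le\max_{\beta\in B(\bo)}P_\beta(\hat M\neq M_\beta)=o(1)$.

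Next I would bound the divergences. For $\beta,\beta'\in B(\bo)$ the vector $\beta-\beta'$ is supported on at most two coordinates, with entries in $\{0,\pm\Delta\}$, so $\|X(\beta-\beta')\|\le\Delta(\|x_{j,k}\|+\|x_{j',k'}\|)\le 2\Delta x_M$ and thus $\mathrm{KL}(P_\beta\|P_{\beta'})=\|X(\beta-\beta')\|^2/(2\sigma^2)\le 2\Delta^2 x_M^2/\sigma^2$. Applying Fano's inequality to the $p$ measures $\{P_\beta\}_{\beta\in B(\bo)}$, in the form lower-bounding the maximal testing error by $1-(\bar K+\log 2)/\log p$ with $\bar K$ the average divergence against the mixture (which is at most the maximal pairwise divergence, by convexity of $\mathrm{KL}$ in its second argument), gives
\[
\max_{\beta\in B(\bo)}P_\beta(\hat\psi\ \text{wrong})\ \ge\ 1-\frac{2\Delta^2 x_M^2/\sigma^2+\log 2}{\log p}.
\]
Since the left side is $o(1)$, rearranging and absorbing $\log 2=o(\log p)$ yields $2\Delta^2 x_M^2/\sigma^2\ge\log p\,(1-o(1))$, i.e. $\Delta^2/\sigma^2\ge x_M^{-2}\log p\,(1-o(1))/2$, which is the claim.

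The step that needs genuine care --- and the main obstacle --- is the reduction: verifying that the $p$ vectors in $B(\bo)$ really induce distinct models $M_\beta$ (so the family carries $\log p$ bits) and that approximate recovery of $M_\beta$ uniformly over $B(\bo)$ forces a good $p$-way test. After that it is the routine Fano computation; the only subtlety there is tracking the constant, since the worst pair consists of two different single-coordinate perturbations $\beta^{(j,k)},\beta^{(j',k')}$ --- hence the factor-two loss $\|X(\beta-\beta')\|\le 2\Delta x_M$ and the resulting $1/2$ in front of $x_M^{-2}\log p$. (A two-point comparison of $\bo$ with a single $\beta^{(j,k)}$ would only deliver $\Delta\gtrsim\sigma/x_M$, with no logarithmic factor.)
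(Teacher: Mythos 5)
Your proposal is correct and follows essentially the same route as the paper: reduce selection over the finite family $B(\bo)$ to a $p$-way test, bound each pairwise Kullback--Leibler divergence by $2\Delta^2x_M^2/\sigma^2$ using that the perturbations are single-coordinate shifts of size $\Delta$, and apply Fano's inequality to get $2x_M^2\Delta^2/\sigma^2\ge\log p\,(1-o(1))$. The only difference is that you make explicit the reduction step (that the selector induces an index estimator because the models $M_\beta$, $\beta\in B(\bo)$, must be distinguished), a point the paper's proof leaves implicit.
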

\begin{proof}
The main ingredient of the proof is Fano's inequality \citep{Ibragimov81, ShenEtAl13}. To apply it we need to bound the Kullback-Leibler divergence $D(j_1,k_1||j_2,k_2)$ between any pair of distributions $P_{\beta^{(j_1,k_1)}}$
 and $P_{\beta^{(j_2,k_2)}}.$ Let $\mathring \mu=X \bo $ and $\mu_{jk} = X\beta^{(j,k)}.$ Using gaussianity of $\varepsilon,$ we obtain for $(j_1,k_1)\neq (j_2,k_2)$ that 
\begin{eqnarray*}
&\,&2\sigma^2 D(j_1,k_1||j_2,k_2) =||  \mu_{j_1,k_1} -  \mu_{j_2,k_2} ||^2\\
&\,& \leq 2 ||  \mu_{j_1,k_1} - \mathring \mu ||^2 + 2 ||  \mu_{j_2,k_2} - \mathring \mu ||^2 \leq 4 x^2_M\Delta^2.
\end{eqnarray*}
Therefore, applying Fano's inequality we obtain
$$
2 x^2_M \Delta^2/\sigma^2 \geq \log p \left(1- \max_{\beta \in B(\bo)} P_\beta( \hat M \neq M_\beta) -\frac{\log 2}{ \log p}
\right).
$$
\end{proof}

Thus, the necessary condition for selection consistency is roughly $ \Delta^2/\sigma^2 \succsim x_M^{-2} \log p/2.$ Recall that for GLAMER  the sufficient condition is $ \Delta^2/\sigma^2 \succsim 128 x_m^{-2} \log p,$ if a design is orthogonal. If we suppose that  $x_m \approx x_M$ (i.e. minimal and maximal numbers of observations per factor level are similar),   then the sufficient condition for GLAMER coincides with the necessary condition with respect to the constant.

\section{Experiments}

In the theoretical analysis of Lasso-type estimators one usually considers only one value of tuning parameter $\lambda.$ We have also followed this way. However, the practical implementations can efficiently return   estimators for a data driven net of tuning parameters, as in the R~package {\tt glmnet}  \cite{FriedmanEtAl10}. Similarly, using a net of $\lambda$'s, the Group Lasso and the Group MCP algorithms
have been implemented in the R~package {\tt grpreg} \citep{BrehenyHuang15}. In the paper we also propose
a net modification of the GLAMER algorithm.

While working on the GLAMER implementation, we noticed that it is similar to the DMR implementation for high dimensional data, so we have decided to use the procedures from the R~package {\tt DMRnet} \citep{DMRnet}. Our implementation is based on two observations. First, from  \citet[Lemma 6]{MajK2015} and condition $ |\hat \beta - \mathring \beta |_{\infty} < \Delta/4 $ we obtain that the "reasonable" agglomerative clustering will have the true model $M_{\mathring \beta}$ on the dendrogram. Hence, the second step of GLAMER can be replaced, for example, by the complete linkage algorithm. Second, when designing a grid of hyperparameters, it is worth to replicate not only $\lambda$, but also $\tau$. However, it is easy to see that for enough $\tau$ values, the result of the procedure will be the entire dendrogram, so we can exclude $\tau$ from the implementation and to process the dendrograms, which are families of nested models. In this way, we obtain the GLAMER implementation scheme.

1. For $\lambda$ belonging to the grid:

   (i) calculate the Group Lasso estimator $\hat \beta(\lambda)$,

   (ii) perform complete linkage for each factor and  get nested family of models $M_1(\lambda)\subset M_2(\lambda)\subset \ldots$

2. For a fixed model dimension $c$, select a model $ M_c$ from  family  $(M_c(\lambda))_\lambda$, which has the minimal prediction loss.

3. Select a final model from sequence $(M_c)_c$ using cross-validation or  the Risk Inflation Criterion (RIC), see \citet{FosterGeorge1994}.

GLAMER uses the Group Lasso estimator $\hat \beta$ to compute dissimilarity measure, but DMR refits $\hat \beta$ with maximum likelihood and computes dissimilarity as the likelihood ratio statistics.

\subsection{Data sets with binary responses}

Adult data set~\citep{Kohavi} contains data from the 1994 US census. It contains 32,561 observations in a file \texttt{adult.data} and 16,281 observations in a file \texttt{adult.test}. The response  represents whether the individual's income is higher than 50,000 USD per year or not. We preprocessed the data as in~\citet{stokell2021}, i.e. we combined  two files together, removed 4 variables representing either irrelevant (\emph{fnlwgt}) or redundant (\emph{education-num}) features or with values for the most part equal to zero (\emph{capital-gain} and \emph{capital-loss}) and then removed the observations with missing values. Preprocessing resulted in 45,222 observations with 2 continuous and 8 categorical variables with $p=93$.

Promoter data set~\cite{Harley, Towell} contains E. Coli genetic sequences of length 57. The response  represents whether the region represents a gene promoter. We removed the \emph{name} variable and further worked with a data set consisting of 106 observations with 57 categorical variables, each with 4 levels representing 4 nucleotides, thus with $p=172$.

Both data sets are available at the UCI Machine Learning Repository~\cite{Dua_Graff}.

\subsection{Data sets with continuous responses}

Insurance data set~\cite{Kaggle} contains data describing attributes of life insurance applicants. The response is an 8-level ordinal variable measuring insurance risk of the applicant, which we treat as a continuous response. We preprocessed the data as in~\citet{stokell2021}, i.e. we removed the irrelevant \emph{id} variable and 13 variables with missing values. Preprocessing resulted in 59,381 observations with 5 continuous and 108 categorical variables with $p=823$.

Antigua data set~\cite{Andrews} contains data concerning maize
fertilizer experiments on the Island of Antigua and is available at the \texttt{R} package \texttt{DAAG}~\cite{Maindonald}. The response measures harvest. We removed the irrelevant \emph{id} variable and one observation with a clearly outlying value of \emph{ears} variable and further worked with a data set consisting of 287 observations with 2 continuous and 3 categorical variables with $p=24$.

\subsection{Setup of experiments}

We evaluated the following methods: \\
- Lasso (\texttt{cv.glmnet} from the \texttt{R} package \texttt{glmnet}), \\
- Group Lasso (\texttt{cv.grpreg} from the \texttt{R} package \texttt{grpreg} with \texttt{penalty="grLasso"}), \\
- Group MCP (\texttt{cv.grpreg} from the \texttt{R} package \texttt{grpreg} with \texttt{penalty="grMCP"} and with \texttt{gamma} set as the default for continuous response, and as 250 for the binary response, which proved to give smaller errors than the default value),\\
- SCOPE from the \texttt{R} package \texttt{CatReg} \citep{stokell2021}. Tuning parameter \texttt{gamma} was chosen as suggested in that paper:  8 or 32 for continuous responses, and as 100 or 250 for the binary responses, \\
- DMR (\texttt{DMRnet} from the \texttt{R} package \texttt{DMRnet}), \\
- GLAMER, which is described above.

All implementation details and codes of above procedures and results of experiments are attached to the paper in the Supplementary Materials. 

 The number of folds in all above-mentioned cross-validation methods was set to 10.

For a given data set and for each evaluated prediction method we performed 100 iterations of the following procedure:
\begin{enumerate}
    \item A training set of $m$ percent of a total number of observations was randomly selected from the data set. 
    \item If there were any variables with only one-level factors, those variables were removed from the training set.
    \item For Insurance, the evaluated methods would repetitively fail to generate trained models, reporting various errors, so this procedure wouldn't terminate. To this end, we additionally removed (greedily) variables resulting in the train data matrix rank deficiency.  
    \item The remaining $100-m$ percent of observations in the data set was treated as a test set for this iteration, with the following modifications: (1) we removed the same variables that had been removed from the train set, (2) we also removed from the test set the observations with levels not present in the training set.
    \item The evaluated method was trained and the resulting model was used to obtain a prediction on a test set. This model and the prediction was then evaluated with the evaluation metrics. If either a training or a test phase couldn't be completed because of reported errors, that iteration was ignored and restarted with a new train sample.
\end{enumerate}
For  Adult and Insurance, we followed~\citet{stokell2021} in choosing $m$ to be equal to 1 and 10, respectively. For Promoter and Antigua data sets (data sets with relatively smaller number of observations) we used $m=70$.

As prediction error (PE) we used mean square error  in the case of continuous responses and misclassification error in the case of binary responses. 

We also measured model dimension (MD) for models reported by each method. The exact calculation depended on an internal representation of a particular model, but the general algorithm could be summarized as follows: sum the number of non-zero parameters, including continuous variables and the intercept (counting the same values related to different levels of the same factor only ones) and subtract the number of constraints imposed in model construction.

\begin{figure}[th]
\centering
\includegraphics[width=1\columnwidth]{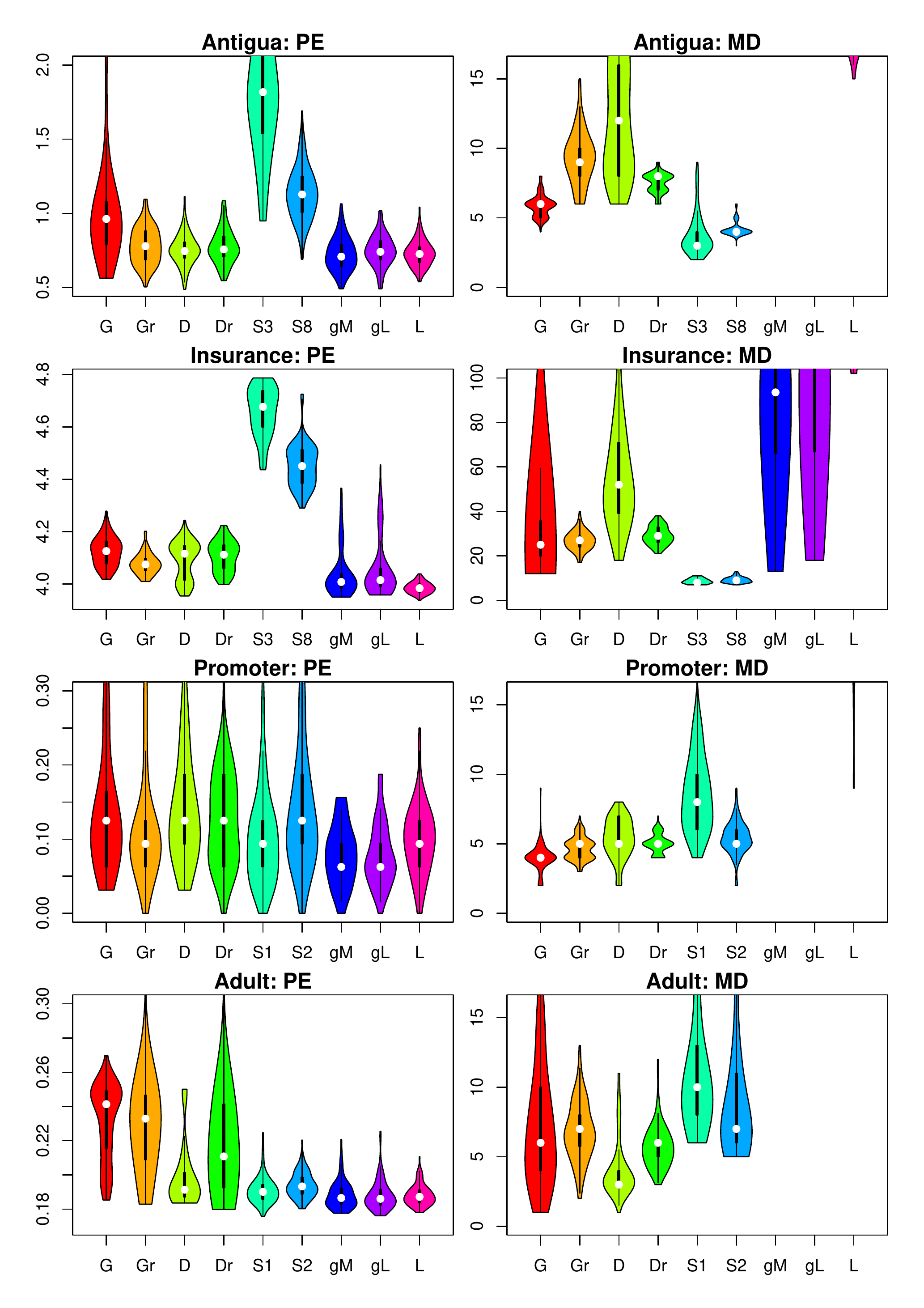} 
\caption{Mean prediction error (PE) and mean model dimension (MD) of considered methods. Details are given in the text.}
\label{fig1}
\end{figure}
 In Figure \ref{fig1} we present mean prediction error (PE) and mean model dimension (MD) of various methods. By default a final model is selected by ten-fold cross-validation. Algorithms used are: G -- GLAMER; Gr -- GLAMER with RIC; D -- DMR; Dr -- DMR with RIC; S3, S8 -- SCOPE-32 and SCOPE-8, respectively, for regression; S1, S2 -- SCOPE-100 and SCOPE-250, respectively, for binary classification; gM -- group MCP; gL -- Group Lasso; L -- Lasso. Most of violins for gM, gL or L are not visible, because their MD are beyond the range of the plot.

\section{Conclusion}

DMR and SCOPE are perhaps the only methods for partition selection for high-dimensional data, but so far there is no proof of selection consistency for either of them in this scenario. We propose GLAMER as a benchmark for theoretical considerations and prove its consistency. Our approach encompasses fundamental models for prediction, that is linear and logistic regression. GLAMER is a simple  generalization of the Thresholded Lasso on partition selection. 

The algorithms for partition selection: GLAMER, DMR and SCOPE have predictive accuracy slightly worse than the Lasso, Group Lasso and Group MCP, but their outputs are significantly more sparse and  have only a few parameters. Therefore, their results are much more interpretable. For data sets with continuous response (Antigua, Insurance), the SCOPE error is slightly larger than errors of the GLAMER and DMR, but its model dimension is smaller. For Adult data, the opposite is true. This difference can probably be reduced by mixing the PE and MD criteria for final selection. 

As a by-product we obtain  optimal weights for the Group Lasso, which are different from those recommended by the authors of this  method. Possibly, the new weights can improve asymptotics of the Group Lasso in the general scenario (not necessarily orthogonal)  and its practical performance as well.

\bibliographystyle{apalike}

\bibliography{refs}

\begin{thebibliography}{}

\bibitem[Andrews and Herzberg, 1985]{Andrews}
Andrews, D.~F. and Herzberg, A.~M. (1985).
\newblock {\em Data. A Collection of Problems from Many Fields for the Student
  and Research Worker}.
\newblock Springer, New York.

\bibitem[Bickel et~al., 2009]{BickelEtAl09}
Bickel, P., Ritov, Y., and Tsybakov, A. (2009).
\newblock Simultaneous analysis of {L}asso and {D}antzig selector.
\newblock {\em Annals of Statistics}, 37:1705--1732.

\bibitem[Blazere et~al., 2014]{Blazere2014}
Blazere, M., Loubes, J.-M., and Gamboa, F. (2014).
\newblock Oracle inequalities for a group lasso procedure applied to
  generalized linear models in high dimension.
\newblock {\em IEEE Transactions on Information Theory}, 60:2303--2318.

\bibitem[Bondell and Reich, 2009]{bondell2009}
Bondell, H.~D. and Reich, B.~J. (2009).
\newblock Simultaneous factor selection and collapsing levels in anova.
\newblock {\em Biometrics}, 65:169--177.

\bibitem[Breheny and Huang, 2015]{BrehenyHuang15}
Breheny, P. and Huang, J. (2015).
\newblock Group descent algorithms for nonconvex penalized linear and logistic
  regression models with grouped predictors.
\newblock {\em Statistics and Computing}, 25:173--187.

\bibitem[B\"uhlmann and van~de Geer, 2011]{BuhlmannGeer11}
B\"uhlmann, P. and van~de Geer, S. (2011).
\newblock {\em Statistics for High-dimensional Data}.
\newblock Springer, New York.

\bibitem[Dua and Graff, 2017]{Dua_Graff}
Dua, D. and Graff, C. (2017).
\newblock {UCI} machine learning repository.
\newblock University of California, Irvine, School of Information and Computer
  Sciences.

\bibitem[Foster and George, 1994]{FosterGeorge1994}
Foster, D. and George, E. (1994).
\newblock The risk inflation criterion for multiple regression.
\newblock {\em Annals of Statistics}, 22:1947--1975.

\bibitem[Friedman et~al., 2010]{FriedmanEtAl10}
Friedman, J., Hastie, T., and Tibshirani, R. (2010).
\newblock Regularization paths for generalized linear models via coordinate
  descent.
\newblock {\em Journal of Statistical Software}, 33:1--22.

\bibitem[Garc{\'\i}a-Donato and Paulo, 2021]{garcia2021variable}
Garc{\'\i}a-Donato, G. and Paulo, R. (2021).
\newblock Variable selection in the presence of factors: a model selection
  perspective.
\newblock {\em Journal of the American Statistical Association}, pages 1--11.

\bibitem[Gertheiss and Tutz, 2010]{gertheiss2010sparse}
Gertheiss, J. and Tutz, G. (2010).
\newblock Sparse modeling of categorial explanatory variables.
\newblock {\em The Annals of Applied Statistics}, pages 2150--2180.

\bibitem[Harley and Reynolds, 1987]{Harley}
Harley, C. and Reynolds, R. (1987).
\newblock Analysis of e. coli promoter sequences.
\newblock {\em Nucleic Acids Research}, 15:2343--2361.

\bibitem[Huang and Zhang, 2012]{HuangZhang12}
Huang, J. and Zhang, C. (2012).
\newblock Estimation and selection via absolute penalized convex minimization
  and its multistage adaptive applications.
\newblock {\em Journal of Machine Learning Research}, 13:1839--1864.

\bibitem[Ibragimov and Has'minskii, 1981]{Ibragimov81}
Ibragimov, I.~A. and Has'minskii, R.~Z. (1981).
\newblock {\em Statistical estimation}.
\newblock Springer, New York.

\bibitem[Kaggle, 2015]{Kaggle}
Kaggle (2015).
\newblock Prudential life insurance assessment.
\newblock https://www.kaggle.com/c/prudential-life-insurance-assessment/data.

\bibitem[Kohavi, 1996]{Kohavi}
Kohavi, R. (1996).
\newblock Scaling up the accuracy of naive-bayes classifiers: A decision-tree
  hybrid.
\newblock In {\em Proceedings of the Second International Conference on
  Knowledge Discovery and Data Mining}, KDD'96, pages 202--207. AAAI Press.

\bibitem[Lounici et~al., 2011]{Lounici_etal2011}
Lounici, K., Pontil, M., van~de Geer, S., and Tsybakov, A.~B. (2011).
\newblock Oracle inequalities and optimal inference under group sparsity.
\newblock {\em The Annals of Statistics}, 39:2164--2204.

\bibitem[Maindonald and Braun, 2010]{Maindonald}
Maindonald, J. and Braun, W. (2010).
\newblock {\em Data Analysis and Graphics Using R}.
\newblock Cambridge University Press.

\bibitem[Maj-Ka\'{n}ska et~al., 2015]{MajK2015}
Maj-Ka\'{n}ska, A., Pokarowski, P., and Prochenka, A. (2015).
\newblock {Delete or merge regressors for linear model selection}.
\newblock {\em Electronic Journal of Statistics}, 9:1749 -- 1778.

\bibitem[Nardi and Rinaldo, 2008]{NardiRinaldo2008}
Nardi, Y. and Rinaldo, A. (2008).
\newblock {On the asymptotic properties of the group lasso estimator for linear
  models}.
\newblock {\em Electronic Journal of Statistics}, 2:605--633.

\bibitem[Oelker et~al., 2014]{oelker14}
Oelker, M.-R., Gertheiss, J., and Tutz, G. (2014).
\newblock Regularization and model selection with categorical predictors and
  effect modifiers in generalized linear models.
\newblock {\em Statistical Modelling}, 14:157--177.

\bibitem[Pauger and Wagner, 2019]{pauger2019bayesian}
Pauger, D. and Wagner, H. (2019).
\newblock Bayesian effect fusion for categorical predictors.
\newblock {\em Bayesian Analysis}, 14(2):341--369.

\bibitem[Prochenka-Soltys and Pokarowski, 2018]{DMRnet}
Prochenka-Soltys, A. and Pokarowski, P. (2018).
\newblock {\em DMRnet: Delete or Merge Regressors Algorithms for Linear and
  Logistic Model Selection and High-Dimensional Data}.
\newblock https://CRAN.R-project.org/package=DMRnet.

\bibitem[Shen et~al., 2013]{ShenEtAl13}
Shen, X., Pan, W., Zhu, Y., and Zhou, H. (2013).
\newblock On constrained and regularized high-dimensional regression.
\newblock {\em Annals of the Institute of Statistical Mathematics},
  65:807--832.

\bibitem[Simon et~al., 2013]{simon2013sparse}
Simon, N., Friedman, J., Hastie, T., and Tibshirani, R. (2013).
\newblock A sparse-group lasso.
\newblock {\em Journal of computational and graphical statistics},
  22(2):231--245.

\bibitem[Stokell et~al., 2021]{stokell2021}
Stokell, B.~G., Shah, R.~D., and Tibshirani, R.~J. (2021).
\newblock Modelling high-dimensional categorical data using nonconvex fusion
  penalties.
\newblock {\em Journal of the Royal Statistical Society: Series B (Statistical
  Methodology)}, 83:579--611.

\bibitem[Tibshirani, 1996]{Tibshirani96}
Tibshirani, R. (1996).
\newblock Regression shrinkage and selection via the {L}asso.
\newblock {\em Journal of the Royal Statistical Society Series B}, 58:267--288.

\bibitem[Tibshirani et~al., 2005]{fused2005}
Tibshirani, R., Saunders, M., Rosset, S., Zhu, J., and Knight, K. (2005).
\newblock Sparsity and smoothness via the fused lasso.
\newblock {\em Journal of the Royal Statistical Society: Series B (Statistical
  Methodology)}, 67:91--108.

\bibitem[Towell et~al., 1990]{Towell}
Towell, G., Shavlik, J., and Noordewier, M. (1990).
\newblock Refinement of approximate domain theories by knowledge-based
  artificial neural networks.
\newblock In {\em Proceedings of the Eighth National Conference on Artificial
  Intelligence (AAAI-90)}. AAAI Press.

\bibitem[van~de Geer and B\"uhlmann, 2009]{GeerBuhlmann09}
van~de Geer, S. and B\"uhlmann, P. (2009).
\newblock On the conditions used to prove oracle results for the {L}asso.
\newblock {\em Electronic Journal of Statistics}, 3:1360--1392.

\bibitem[Ye and Zhang, 2010]{YeZhang10}
Ye, F. and Zhang, C. (2010).
\newblock Rate minimaxity of the {L}asso and {D}antzig {S}elector for the $l_q$
  loss in $l_r$ balls.
\newblock {\em Journal of Machine Learning Research}, 11:3519--3540.

\bibitem[Yuan and Lin, 2006]{groupLasso}
Yuan, M. and Lin, Y. (2006).
\newblock Model selection and estimation in regression with grouped variables.
\newblock {\em Journal of the Royal Statistical Society: Series B (Statistical
  Methodology)}, 68:49--67.

\bibitem[Zhang and Zhang, 2012]{ZhangZhang12}
Zhang, C. and Zhang, T. (2012).
\newblock A general theory of concave regularization for high-dimensional
  sparse estimation problems.
\newblock {\em Statistical Science}, 27:576--593.

\bibitem[Zhou, 2009]{Zhou09}
Zhou, S. (2009).
\newblock Thresholding procedures for high dimensional variable selection and
  statistical estimation.
\newblock In {\em NIPS}, pages 2304--2312.

\end{thebibliography}

\end{document}